\def\amsbb{\use@mathgroup \M@U \symAMSb}
\newtheorem{thm}{Theorem}[section]
\newtheorem{prop}[thm]{Proposition}
\theoremstyle{definition}
\newtheorem{defi}[thm]{Definition}
\newtheorem{obs}[thm]{Remark}
\numberwithin{equation}{section}
\def\C{\mathcal {C}}
\def\Com{\mathbb{C}}
\def\A{{\mathcal{C}}^\infty(M)}
\def\AT{{\mathcal{C}}^\infty(TM)}
\def\AC{{\mathcal{C}}^\infty(T^*M)}
\def\F{\mathcal {F}}
\def\Ocal{\mathcal{O}}
\def\R{\amsbb{R}}
\def\m{\mathfrak m}
\def\To{\longrightarrow}
\def\PHI{\boldsymbol{\Phi}}
\def\exp{\textrm{exp}\,}
\def\be{\begin{equation}
}
\def\ee{\end{equation}
}
\begin{document}

\title[Riemannian exponential and quantization]
{Riemannian exponential and quantization}

\author{ J.  Mu\~{n}oz-D{\'\i}az and R. J.  Alonso-Blanco}

\address{Departamento de Matem\'{a}ticas, Universidad de Salamanca, Plaza de la Merced 1-4, E-37008 Salamanca,  Spain.}
\email{ricardo@usal.es}

\begin{abstract}
This article continues and completes the previous one \cite{QuantizacionMunozAlonso}. First of all, we present  two methods of quantization associated with a linear connection given on a differentiable manifold, one of them being the one presented in \cite{QuantizacionMunozAlonso}. The two methods allow quantize functions that come from covariant tensor fields. The equivalence of both is demonstrated as a consequence of a remarkable property of the Riemannian exponential (Theorem \ref{tidentidad1}) that, as far as we know, is new to the literature. On the other hand, the extension of the previously mentioned quantization to functions of a very broad type can be carried out by generalizing the method of \cite{QuantizacionMunozAlonso} in terms of fields of distributions.
\end{abstract}
\bigskip

\maketitle


\tableofcontents

\setcounter{section}{-1}

\bigskip

\section{Introduction}

The ``factor-ordering problem'' was unavoidable within the Matrix Mechanics methods. It has been kept in Quantum Mechanics textbooks as an unpleasant question (\cite{Weyl}, p. 98), even unsolvable (\cite{Messiah}, II, 15) and always subtle (\cite{Penrose}, p. 497-98). This has led to a multitude of publications, many interesting from a purely mathematical point of view. But as regards the quantization of classical mechanical systems is a problem of a ghostly nature: it only appears if it is invoked.

The magnitudes of Classical Mechanics are always defined by tensor fields: the kinetic energy is (up to a constant factor) the very metric tensor, thought of as a function in the phase space; the moments are contractions of vector fields with the Liouville form, etc.

Therefore, the problem of the quantization of classical magnitudes consists, in principle, in giving a rule that canonically assigns to each tensor field in the configuration space, $M$, a differential operator on $\C^\infty(M)$.

As an example, when the configuration space is $\R^n$, its vector structure allows us to define the successive differentials, $d^rf$, of each function $f\in\C^\infty(M)$. Every $d^rf$ is a $r$-order covariant tensor field on $\R^n$.

For each symmetric contravariant tensor field $\Phi$ or order $r$, we define the differential operator $\widehat\Phi$ on $\C^\infty(\R^n)$ by the rule
\begin{equation}\label{cuantizacionplana}
\widehat\Phi(f):=(-i\hbar)^r\langle\Phi\, ,\,d^rf\rangle
\end{equation}
(tensor contraction).

Let us fix a pseudo-Riemannian metric $T_2$ with constant coefficients (= invariant under translations) in $\R^n$; $T_2$ establishes a canonical correspondence between covariant tensors and contravariant tensors (raising and lowering indexes), which allows us translate (\ref{cuantizacionplana}) to a rule of quantization for covariant tensors. For instance,
$$\widehat T_2=-\hbar^2\Delta,$$
where $\Delta$ is the Laplacian operator associated with the metric.

The assignment $\Phi\to\widehat\Phi$ is the classical ``canonical quantization'' (see, for instance, \cite{Mackey}), but presented without the ``factor-ordering problem''.

The generalization of the rule (\ref{cuantizacionplana}) to the case where $M$ is an arbitrary smooth manifold and $\nabla$ a symmetric linear connection in $TM$ is obvious: for each symmetric cotravariant tensor field $\Phi$ on $M$ of order $r$ and each $f\in\C^\infty(M)$:
\begin{equation}\label{cuantizacion}
\widehat\Phi(f):=(-i\hbar)^r\langle\Phi\, ,\,\nabla^rf\rangle
\end{equation}
(tensor contraction).

Given a pseudo-Riemannian metric $T_2$ in $M$, the quantization of covariant tensor fields is performed as above. The connection $\nabla$ can be chosen differently than Levi-Civita connection of $T_2$; the quantization of covariant tensor fields depends on the couple of data $(\nabla,T_2)$.

Note that for vector fields in $M$ (contravariant tensor fields of order 1) the quantization (\ref{cuantizacion}) does not depends on $\nabla$, because $\nabla f=df$, and so is canonically defined by the differentiable structure of $M$.

The rule of quantization (\ref{cuantizacion}) is so natural that it is highly unlikely that it is not present in the literature. But we have not found any precise reference; the closest to that we know of is \cite{LychaginQ}. 

The rule (\ref{cuantizacion}) is valid for functions on $TM$ which are polynomials along the fibers and only for these ones; no method of ``passing to the limit'' allows one to extend the quantization to another class of functions, as we will see. In this work we will show that (\ref{cuantizacion}) is equivalent, in the case of polynomial functions in fibers, to the quantization method exposed in \cite{QuantizacionMunozAlonso}; with additional assumptions on the $\nabla$ connection, the quantization rule \cite{QuantizacionMunozAlonso} allows quantization of classes of functions in $TM$ that are much broader than polynomial in fibers.

The equivalence of rules of quantization (\ref{cuantizacion}) and that given in \cite{QuantizacionMunozAlonso} is derived from a nice property of the Riemannian exponential map, $\textrm{exp}$, defined by a connection $\nabla$: $\textrm{exp}$ transforms the iterated (symmetrized) covariant differentials of each function $f\in\C^\infty(M)$ into the corresponding ``ordinary'' differentials of $\textrm{exp}^*f$ in the fibres of $TM$ (defined by means of the vector structure). Apparently, this property of the exponential map is not found in the literature.

On the other hand, the quantization method \cite{QuantizacionMunozAlonso} can be extended to a wider set of functions than just those that come from tensors (``Hamiltonian functions'', which are polynomials in fibers). To this end, we must consider those functions whose Fourier transform in fibers produces linear functionals on the ring of smooth functions and compose with the injection that determines the Riemanian exponential. These include those functions characterized by the Paley-Wiener-Schwartz theorem.

Part 1 of this work is preparatory in nature. We fix the terminology and make some observations about the classical character  of de Broglie waves (prior to any quantization). Starting from the de Broglie waves, the classical path leads us to a ``quasi Schrödinger'' equation \cite{Holland, RM, RMarxiv} by interpreting the motion of the virtual particles on each solution of the Hamilton-Jacobi equation as a fluid (see \cite{intermediatefluids} about the relations of fluids with intermediate integrals of classical mechanical systems).

By the classical way it does not seem possible to arrive at the Schrödinger equation. But we will show in Section \ref{SKG} that, in a precise sense, such an equation is the only one canonically related to Newton equations.

\bigskip

\section{Notes on Classical Mechanics and Undulatory Mechanics}\label{uno}

\subsection{Structures previous to the metric}\label{previas}

Let $M$ be a smooth manifold of dimension $n$. Let $TM$, $T^*M$ be the tangent and cotangent bundles of $M$, respectively. Let $\C^\infty(M)$ be the ring of differentiable  functions on $M$ with complex values. We will consider $\A$ as a subring of $\AT$ by means of the injection derived from the canonical projection $\pi\colon TM\to M$.

The vector fields tangent to $TM$ which (as derivations of the ring $\AT$) kill the subring $\A$, are the \emph{vertical} tangent fields. The differentiable 1-forms on $TM$ which, by interior product, kill the vertical tangent fields are the \emph{horizontal} 1-forms on $TM$. The lifting of 1-forms from $M$ to $TM$ by means of $\pi^*$ are horizontal and, locally, any horizontal 1-form on $TM$ is a linear combination of such 1-forms with coefficients in $\AT$.

Each horizontal 1-form $\alpha$ on $TM$ defines on $TM$ a function $\dot\alpha$ given by $\dot\alpha(u_x)=\langle\alpha,u_x\rangle$ (inner product), for each $u_x\in TM$. In particular, for each fuction $f\in\A$, the function $\dot{({df})}$ will be denoted, for short, $\dot f$. Essentially, $\dot f$ is $df$:
$\dot f(u_x)=\langle df,u_x\rangle=u_x(f)$ (derivative of $f$ by $u_x\in TM$). If $(x^1,\dots,x^n)$ is a system of coordinates on an open subset of $M$, the $(x^1,\dots,x^n,\dot x^1,\dots,\dot x^n)$ are coordinates on the corresponding open subset of $TM$.

Each covariant tensor field $a$ of degree $r$ on $M$ canonically defines a function $\dot a$ on $TM$, polynomial along the fibres:
$\dot a(u_x)=\langle a,\overset{\text{$r$ times}}{\overbrace{u_x\otimes\cdots\otimes u_x}}\rangle$. In local coordinates, $\dot a$ is obtained by substituting $dx^j$ by $\dot x^j$ in the expression of the tensor $a$.

The linear structure of each fibre $T_xM$, allows us to identify the tangent space to $T_xM$ at each one of its points $u_x$ with the very vector space $T_xM$: to the vector $v_x\in T_xM$ it corresponds the vector $V_{u_x}\in T_{u_x}(T_xM)$ that is the ``derivative along $v_x$''. We will say that $V_{u_x}$ is the \emph{vertical representative} of $v_x$ at $u_x$ and that $v_x$ is the \emph{geometric representative} of $V_{u_x}$.

 By going to the definitions it is checked that, for each $f\in\A$, we have $V_{u_x}(\dot f)=v_x(f)$. In this way, each tangent vector $v_x\in T_xM$ determines on its fibre $T_xM$ a tangent field which is ``constant'' (= parallel).

 Each tangent field on $M$ determines a vertical tangent field on $TM$, constant along each fibre.

As a consequence, each contravariant tensor field $\Phi$ on $M$ determines a \emph{vertical} contravariant tensor field
$\PHI$ on $TM$, constant (=parallel) along each fibre.

In local coordinates, $\PHI$ is obtained from $\Phi$ by substituting each field $\partial/\partial x^j$ by its vertical representative $\partial/\partial\dot x^j$.

A symmetric contravariant tensor field $\Phi$ of degree $r$  on $M$ determines on $T^*M$ a function $F$, polynomial on the fibres, defined by
$$F(\alpha_x)=\langle\Phi,\overset{\text{$r$ times}}{\overbrace{\alpha_x\otimes\cdots\otimes\alpha_x}}\,\rangle$$
(tensor contraction).
We will say that $F$ is the \emph{Hamiltonian associated} with $\Phi$.

If $(x^1,\dots,x^n)$ are coordinates on an open subset of $M$, the Hamiltonian associated with $\partial/\partial x^j$ is usually denoted by $p_j$:
$$p_j(\alpha_x)=\langle\alpha_x,\partial/\partial x^j\rangle.$$
The functions $(x^1,\dots,x^n,p_1,\dots,p_n)$ are local coordinates on $T^*M$. For a given contravariant tensor field $\Phi$ on $M$, its associated Hamiltonian is obtained by substituting in the expression of $\Phi$ each $\partial/\partial x^j$ by $p_j$.

Symmetric contravariant tensor fields on $M$, homogeneous or not, canonically corresponds to the functions $\in\C^\infty(T^*M)$ that are polynomials along the fibres. We will refer to this particular type of functions as \emph{Hamiltonians}.

 In  $T^*M$ it is defined the \emph{Liouville 1-form} $\theta$ by
 $\theta_{\alpha_x}=\pi^*\alpha_x$, for each $\alpha_x\in T^*M$ ($\pi\colon T^*M\to M$ is the canonical projection). We will simplify the notation by putting $\theta_{\alpha_x}=\alpha_x$, understanding that covariant tensors in general rise from $M$ to $T^*M$ by ``pull-back'' through $\pi^*$.
 In local coordinates $(x^1,\dots,x^n,p_1,\dots,p_n)$, we have $\theta=p_j dx^j$.

 The 2-form $\omega_2:=d\theta$ is the \emph{symplectic form} on $T^*M$. In local coordinates, $\omega_2=dp_j\wedge dx^j$.

 The 2-form $\omega_2$ has no kernel, so establishes an isomorphism between the $\C^\infty(T^*M)$-module of tangent fields on $T^*M$ and that of the 1-forms on $T^*M$:
 $$D\mapsto \alpha:=D\,\lrcorner\,\omega_2.$$

 The structure of Lie algebra (given by the commutator) in the module of tangent fields is translated to the module of 1-forms defining an structure of Lie algebra given by the \emph{Poisson bracket}. The Poisson bracket of two closed 1-forms is an exact 1-form. However, in order not to leave arbitrary constants, a Poisson bracket of functions must be defined: for each function $F\in\C^\infty(T^*M)$ the  \emph{Hamiltonian field} of $F$ is defined by the condition $D_F\,\lrcorner\,\omega_2=dF$.

 The Poisson bracket of two functions $F$, $G$, is defined by
 $$\{F,G\}:=D_FG$$
 (which equals $-D_GF=-2\omega_2(D_F,D_G)$).

  For the local coordinates $(x^j,p_j)$, the Hamiltonian fields are $D_{x^j}=\partial/\partial p_j$, $D_{p_j}=-\partial/\partial x^j$, so that
  $$\{x^j,x^k\}=0,\quad \{p_j,p_k\}=0,\quad\{p_j,x^k\}=-\delta_j^k.$$

 In order to avoid confusions with the terminology, let us observe that the Hamiltonian function associated with the tensor $\partial/\partial x^j$ is $p_j$, while $\partial/\partial x^j$ is the Hamiltonian field of function $-p_j$.

We have seen that to each covariant tensor field $a$ of order $r$ on $M$ it corresponds a function $\dot a$ on $TM$ polynomial on the fibres. On $T^*M$, the symplectic structure $\omega_2$ makes a vertical tangent field $\widetilde\alpha$ on $TM$ correspond to each 1-form $\alpha$ on $M$, by the  rule $\widetilde\alpha\,\lrcorner\,\omega_2=\alpha$. In local coordinates, the field $\widetilde\alpha$ which corresponds to $dx^j$ is $\partial/\partial p_j$.
For arbitrary order $r$, the correspondence established by the symplectic structure assigns to each symmetric covariant tensor field $a$ of order $r$, a symmetric contravariant tensor field $\PHI_a$ of order $r$ and ``vertical'' (its contraction with any ``horizontal'' tensor vanishes); in local coordinates, $\PHI_a$ is obtained by substituting each $dx^j$ by $\partial/\partial p_j$ in the expression of $a$. This tensor field $\PHI_a$ gives on each fibre of $T^*M$ a differential operator of order $r$ that does not depend on the coordinates $(x^1,\dots,x^n)$, because changes of local coordinates on $M$ give always linear changes of coordinates in the fibres of $T^*M$ (and also of $TM$). By acting fiberwise it is obtained a differential operator $\widetilde\PHI_a$, on $\C^\infty(T^*M)$, that kills the subring $\A$.

Therefore, we have the correspondences
$$\dot a(x,\dot x)\quad\longleftrightarrow\quad a(x,dx)\quad\longleftrightarrow\quad \PHI_a\quad\longleftrightarrow\quad\widetilde\PHI_a(x,\partial/\partial p).$$
$\PHI_a$ is the polynomial in the $\partial/\partial p_j$ that results by the substitution in the expression of $a$ each $dx^j$ by $\partial/\partial p_j$.

The correspondence $\dot a\to\widetilde\PHI_a$ is modified by a constant factor $\dot a\to k^{-r}\widetilde\PHI_a$ (for tensors of order $r$) if the symplectic form $\omega_2$ is changed to $k\omega_2$ (where $k\in\mathbb{C}$ is arbitrary). In Quantum Mechanics, it is taken $k=i/\hbar$.

The same association $$\text{\{covariant symmetric tensor on $M$\} $\to$ \{vertical differential operator on $TM$\}}$$ is obtained from the Fourier transform, by using the linear duality between $TM$ and $T^*M$. Let us denote by $\mathcal{S}(TM)$ the space of complex functions on $TM$ which, when restricted to each fibre $T_xM$, are of class $\C^\infty$ and rapidly decreasing they and all of their derivatives. Analogous meaning for $\mathcal{S}(T^*M)$. On each fibre $T_xM$ (being a $\R$-linear space) there is a measure invariant by translation $\mu$, univocally determined up to a multiplicative constant (``Haar measure''). Once fixed that factor for each $T_xM$, it is defined the Fourier transform  $\mathcal{S}(TM)\to\mathcal{S}(T^*M)$ by
$$(\mathcal{F}f)(\alpha_x):=\int_{T_xM} f(v_x)e^{\frac i\hbar\langle v_x,\alpha_x\rangle}\,d\mu(v_x).$$
In local coordinates:
$$(\mathcal{F}f)(x,p):=\int_{T_xM} f(x,\dot x)e^{\frac i\hbar p_j\dot x^j}\,d\dot x^1\cdots d\dot x^n.$$
(the constant factor that affects the integral is irrelevant for the following).

By differentiation under the integral sign we get the classical formula
$$\F(\dot a(x,\dot x)f(x,\dot x))=a(x,-i\hbar\partial/\partial p)(\F f)(x,p),$$
that is, $\F\circ\dot a=\widehat\PHI_a\circ\F$, where $\widehat\PHI_a$ is the vertical differential operator which results of substituting in the  tensor $a$ each $dx^j$ by $-i\hbar\partial/\partial p_j$.

This is the correspondence given by the symplectic structure $(i/\hbar)\omega_2$·

For later references, let us write the correspondence between symmetric covariant tensor fields on $M$ and vertical differential operators on $T^*M$, once $\omega_2$ is substituted by $(i/\hbar)\omega_2$:
\begin{equation}\label{corr1}
\dot a=a(x,\dot x)\quad\longleftrightarrow\quad a=a(x,dx)\quad\longleftrightarrow\quad \widehat\PHI_a(x,\partial/\partial p)=a(x,-i\hbar \partial/\partial p).
\end{equation}
\bigskip

\subsection{Introduction of a metric. Classical mechanical systems}\label{intrometrica}

 Let $T_2$ be a pseudo-Riemannian metric (non degenerate of arbitrary signature) on the manifold $M$. Such a metric determines an isomorphism of fibre bundles $TM\simeq T^*M$, that allows us to transport from one to each other all the structures that we have considered. Hence, the Liouville form $\theta$ and the symplectic form $\omega_2$ passe from $T^*M$ to $TM$, where we will denote them in the same way.

  If the expression of the metric in local coordinates is $T_2=g_{jk}(x)dx^j\,dx^k$, the isomorphism $TM\simeq T^*M$ is expressed by the equations
  $p_j=g_{jk}\dot x^k$. The differential operators $\partial /\partial p_j$ transported to $TM$ become $g^{jk}\partial/\partial \dot x^k$, and in the correspondence (\ref{corr1}), the operator $\widehat\PHI_a$ is $a(x,-i\hbar g^{jk}\partial/\partial\dot x^k)$. For the 1-form $\alpha_j=g_{jk} dx^k$, it holds $\dot\alpha_j=p_j$, and the corresponding operator $\widehat\PHI_a$ is $g_{jk}(-i\hbar g^{k\ell}\partial/\partial \dot x^\ell)=-i\hbar\partial/\partial\dot x^j$:
  \begin{equation}\label{corr2}
  p_j\mapsto -i\hbar\frac{\partial}{\partial\dot x^j}
  \end{equation}
  in the correspondence of functions on $TM$ linear along the fibres with vertical tangent fields.

  In coordinates of $TM$, $\theta=g_{jk}\dot x^jdx^k$ and the function associated with $\theta$ on $TM$ is
  $\dot\theta=g_{jk}\dot x^j\dot x^k=2T$ where $T$ is the \emph{kinetic energy} function.

  On $TM$ the Hamiltonian tangent field for the function $-T$ is the \emph{geodesic field} of $(M,T_2)$; according its very definition, it holds
  \begin{equation}\label{geodesico1}
  {D_G}\lrcorner\,\omega_2+dT=0
  \end{equation}

  For later references, the well known expression of the geodesic field is
  \begin{equation}\label{geodesico2}
  D_G=\dot x^j\frac\partial{\partial x^j}-\Gamma_{k\ell}^j(x)\dot x^k\dot x^\ell\frac\partial{\partial \dot x^j}
  \end{equation}
  where the $\Gamma$' are the Christoffel symbols of the metric (within our convention,
   $\Gamma_{k\ell}^j=\left\{{\substack{j\\ k\ell}}\right\}$).

  Let us recall that a \emph{second order differential equation} on $M$ is, by definition, a tangent field $D$ on $TM$ such that, as a derivation, takes each $f\in\A$ to $Df=\dot f$. Thereby, $D_G$ in (\ref{geodesico2}) is a second order differential equation.

  Two second order differential equations on $M$ derive in the same way the subring $\A$ of $\AT$. Thus, any second order differential equation $D$ on $M$ is of the form $D=D_G+V$, where $V$ is a vertical tangent field on $TM$.

  The vertical tangent fields are the \emph{forces} of the Classical Mechanics. A classical-mechanical system  is a set comprised by three data $(M,T_2,V)$ and the \emph{Newton law} says that the evolution of the space of states $TM$ is the flow of the field $D=D_G+V$. In particular, when $V=0$, the system evolves according the geodesic flow (\emph{inertial law}).

  In order to ``visualize'' a force $V$ in an state $u_x\in TM$ (``position-velocity state'') we must translate the vertical vector $V_{u_x}$ to its geometrical representative $v_x$. Once this is done, the Newton law can be stated in the original form ``force = mass $\times$ acceleration'': the trajectory of the field $D$ that passes through the point $u_x$ is projected onto $M$ as a curve whose tangent field $u$ (defined along the curve) holds $v_x=\nabla_{u_x}u$. The left member is the ``force'' and the right member is  the ``mass times acceleration'', understood that masses and inertial moments are incorporated as factors in $T_2$. For further details see (\cite{MecanicaMunoz}, Section 1).

  By looking at the coordinate expression of the symplectic form, we immediately see that, in the correspondence established by $\omega_2$ between tangent fields and 1-forms on $T^*M$, the vertical fields correspond exactly with horizontal 1-forms: $i_V\omega_2=-\alpha$, horizontal. By applying this equality to the field $D=D_G+V$ it results
  \be\label{Newton}
  D\lrcorner\,\omega_2+dT+\alpha=0.
  \ee
  Equation \ref{Newton} expresses the biunivocal correspondence between second order differential equations on $M$ and horizontal 1-forms on $TM$. The form $\alpha$ is the \emph{work form} of the mechanical system.

  A  mechanical system $(M,T_2,\alpha)$ is said to be \emph{conservative} when the $\alpha$ is an exact differential form, $dU$. By taking into account that $\alpha$ is horizontal, $U$ have to be a function $\in\A$. The sum  $H=T+U$ is the \emph{Hamiltonian} of the system, and (\ref{Newton}) is
  \begin{equation}\label{hamiltoniano1}
  D\lrcorner\,\omega_2+dH=0
  \end{equation}

   $D$ is the Hamiltonian field of the function $-H$ in the terminology of Section \ref{previas}.  Equation (\ref{hamiltoniano1}), when is written in coordinates of $T^*M$, is the system of Hamilton canonical equations.

   Let us highlight the following consequence of (\ref{Newton}) that will be necessary in Section \ref{SKG}:

   \begin{prop}
   The Hamiltonian fields on $T^*M$ that, by means of the metric $T_2$, are transferred to $TM$ as second order differential equations are exactly those that govern the evolution of conservative mechanical systems on $(M,T_2)$ through (\ref{hamiltoniano1}).
   \end{prop}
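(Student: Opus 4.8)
The plan is to transport everything from $T^*M$ to $TM$ via the bundle isomorphism $TM\simeq T^*M$ induced by $T_2$, and then read off the conclusion directly from the correspondence (\ref{Newton}), which is exactly the ``consequence'' this proposition is meant to record.

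First I would translate the hypothesis. A Hamiltonian field on $T^*M$ is a field $D$ with $D\lrcorner\omega_2=dF$ for some $F\in\C^\infty(T^*M)$; carrying $D$, $\omega_2$ and $F$ over to $TM$ through $T_2$, the transported field still satisfies $D\lrcorner\omega_2=dF$ on $TM$, and the phrase ``$D$ passes into $TM$ as a second order differential equation'' means precisely that this transported $D$ is a second order differential equation (SODE, in the sense recalled before (\ref{geodesico1})) on $TM$.

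Next I would use (\ref{Newton}) in both directions. If $D$ is a SODE, (\ref{Newton}) attaches to it a horizontal $1$-form $\alpha$ (its work form) with $D\lrcorner\omega_2+dT+\alpha=0$; substituting $D\lrcorner\omega_2=dF$ gives $\alpha=-d(T+F)$. Conversely, since $\omega_2$ is nondegenerate, the SODE that (\ref{Newton}) attaches to this horizontal $\alpha$ can only be $D$ itself. Hence $D$ is a SODE exactly when the exact $1$-form $-d(T+F)$ is horizontal. Now I would invoke the elementary remark that an exact $1$-form $dg$ on $TM$ is horizontal iff $Vg=0$ for every vertical field $V$, i.e.\ iff $g$ is constant along the (connected) fibres of $TM$, i.e.\ iff $g=\pi^*h$ with $h\in\A$. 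Applying this to $g=T+F$: the field $D$ is a SODE iff $T+F=-U$ for some $U\in\A$, i.e.\ iff $F=-(T+U)=-H$ with $H:=T+U$. In that case $D\lrcorner\omega_2=dF=-dH$ is exactly (\ref{hamiltoniano1}), the work form $\alpha=dU$ is exact with primitive in $\A$, and so the mechanical system $(M,T_2,dU)$ is conservative with Hamiltonian $H$; the reverse implication is obtained by running the same chain of equalities backwards, starting from $D\lrcorner\omega_2+dH=0$ with $H=T+U$, $U\in\A$.

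I do not expect a genuine obstacle here. The only points needing care are the faithful transport of the relation $D\lrcorner\omega_2=dF$ and of the notion of SODE between $T^*M$ and $TM$, and the small remark that an exact $1$-form on $TM$ is horizontal precisely when its primitive descends to $M$ --- connectedness of the fibres being what upgrades ``locally constant along fibres'' to ``pulled back from $M$''.
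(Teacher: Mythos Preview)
Your proposal is correct and follows essentially the same route as the paper: the proposition is stated there as a direct consequence of (\ref{Newton}), and the explicit argument (which the paper spells out later in the proof of Theorem~\ref{unicosbuenos}) is exactly yours---combine $D\lrcorner\,\omega_2=dF$ with (\ref{Newton}) to see that the work form $\alpha$ is exact, then use that a horizontal exact form has a primitive in $\A$. Your added remark about connectedness of the fibres is the only place where you are more careful than the paper, which simply asserts that step.
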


   No other infinitesimal canonical transformation on $T^*M$ is the law of evolution of a mechanical system on $ (M, T_2) $.

 A tangent field $u$ in $M$ is an \emph{intermediate integral} of the field $D$ when the solution-curves of $u$ in $M$, lifted as curves to $TM$ (each point $x$ of the curve goes to the point $(x,u_x)$ of $TM$) is also a solution of $D$. We can think about a given vector field $u$ as a section of the fibre bundle $TM\to M$; when passing to $T^*M$, the section $u$ corresponds to a section $\alpha$ of $T^*M\to M$ where $\alpha=u\lrcorner\, T_2$ or, in other words, $u=\textrm{grad}\,\alpha$. If the section $\alpha$ is a Lagrangian submanifold of $T^*M$, locally $\alpha=dS$ for a certain function $S$ on $M$ (or on some open subset). In such a way, the necessary and sufficient condition for $u=\textrm{grad}\,S$ to be an intermediate integral of the Hamiltonian field $D$ in (\ref{hamiltoniano1}) is that
 \begin{equation}\label{HJ}
 H(\textrm{grad}\,S)=E,\quad\text{constant,}
 \end{equation}
 where $H(\textrm{grad}\,S)$ is the specialization oh $H\in\AT$ to the section $u=\textrm{grad}\,S$. (\ref{HJ}) is the \emph{Hamilton-Jacobi equation} (see \cite{RM,RMarxiv}).

\subsection{De Broglie waves and Schrödinger equation}

Let us consider a conservative mechanical system with configuration space $(M,T_2)$ and Hamiltonian $H=T+U$. Let $S\in\C^\infty(M)$ be a solution of the Hamilton-Jacobi equation (\ref{HJ}); $\textrm{grad}\,S$ is an intermediate integral of the equations of motion. Let $\textrm{Grad}\,S$ be the vertical field on $TM$ whose geometric representative is $\textrm{grad}\,S$; we have $\textrm{Grad}\,S\lrcorner\,\omega_2=\textrm{grad}\,S\lrcorner\, T_2=dS$, whereby $\textrm{Grad}\,S$ is the Hamiltonian field whose Hamiltonian function is $S$.

The correspondence (\ref{corr1}) applied to the tensor $dS$ is
$$\dot S\quad\leftrightarrow \quad dS=\frac{\partial S}{\partial x^j}\,dx^j\quad\leftrightarrow\quad -i\hbar\frac{\partial S}{\partial x^j}\frac{\partial }{\partial p_j}=     -i\hbar\,\textrm{Grad}\,S.$$

 For first order differential operators we have a rule, \emph{previous to any quantization rule}, which assigns a field on $M$ to each vertical vector field constant along the fibres of $TM$: to go from a vertical field to its geometric representative. In this case, $-i\hbar\,\textrm{Grad}\,S\mapsto-i\hbar\,\textrm{grad}\,S$.

 The correspondence:
$$\text{\emph{Classical magnitude (function on $TM$)}, $\dot S$}\,\,\to\,\,
  \text{\emph{Differential operator on $\C^\infty(M)$}, $-i\hbar\,\textrm{grad}\,S$}$$
  must remain valid in any quantization law.

  On the section $\textrm{grad}\,S$ of $TM$, the function $\dot S$ takes the value $\dot S\mid_{\textrm{grad}\,S}=\langle dS,\textrm{grad}\,S\rangle=\|\textrm{grad S}\|^2=2(E-U)$.
  The functions $\varphi$ on $M$ on which the classical magnitude $\dot S\mid_{\textrm{grad}\,S}$ and its associated differential operator, $-i\hbar\,\textrm{grad}\,S$, act (the first one by means of multiplying) giving the same result, are those which hold the differential equation:
  \begin{equation}\label{Broglie1}
  -i\hbar\,\textrm{grad}\,S(\varphi)=2(E-U)\cdot\varphi.
  \end{equation}

  The parameter $t$ proper for the trajectories of the vector field $\textrm{grad}S$ holds on each trajectory
  \be\label{tyS}
  dt=\frac{dS}{\|\textrm{grad}\,S\|^2}=\frac{dS}{2(E-U)}.
  \ee

  By changing the parameter $t$ by the parameter $S$  on each trajectory, Equation (\ref{Broglie1}) is
  $$-i\hbar\frac{d\varphi}{dS}=\varphi,$$
  which gives
  \begin{equation}\label{Broglie2}
  \varphi=\varphi_0e^{i\frac S\hbar}=\varphi_0e^{2\pi i\frac Eh\cdot\frac SE},
  \end{equation}
  where $\varphi_0$ is an arbitrary first integral of the vector field $\textrm{grad}\,S$.

  The wave function $\varphi$ is derived form the condition of that, on it, give the same result the action  of the differential operator $-i\hbar\,\textrm{grad}\,S$ and (by multiplication) the classical magnitude $\dot S$ (from which that operator proceeds) restricted to the section $\textrm{grad}\,S$. The generalization of this principle of formation of wave equations is that used in \cite{QuantizacionMunozAlonso}.

  Note that the advance rate of the wavefronts for $\varphi$ in (\ref{Broglie2}) is uniform if the quotient  $S/E$ is used as time, while the time $t$ that measures the motion of the virtual particles in the mechanical system holds  (\ref{tyS}). These two times are different, except for the geodesic field ($U=0$). In general, it seems rather that time in Quantum Mechanics is a linear combination of both times, $t$ and $S/E$, as shown in detail in \cite{TiempoMunozAlonso}.

Going back to (\ref{Broglie2}) and taking constant $\varphi_0$ an straightforward computation gives
the identity
 \begin{equation}\label{Broglie4}
  \left(-\frac{\hbar^2} 2\Delta+U\right)\varphi=\left(\frac{\hbar}{2i}\Delta S+H(\textrm{grad} S)\right)\varphi.
  \end{equation}
  From that identity it is derived the
  \begin{prop}[\cite{RM}]
  Let $(M,T_2,dU)$ be a conservative mechanical system. Let $S\in\C^\infty(M)$. From the three conditions
  \begin{enumerate}
  \item[A)] $S$ holds the Hamilton-Jacobi equation (\ref{HJ})
  \item[B)] $S$ is harmonic: $\Delta S=0$
  \item[C)] $\varphi=e^{iS/\hbar}$ holds the Schrödinger equation $\left(-\frac\hbar 2\Delta+U\right)\varphi=E\varphi$
  \end{enumerate}
  each couple of them implies the third one.

  When $M$ is oriented, the metric $T_2$ gives a volume form and condition B) can restated as
  \begin{enumerate}
  \item[B')] $-i\hbar\,\textrm{grad}\,S$ is self-adjoint.
  \end{enumerate}
  \end{prop}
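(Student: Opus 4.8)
The plan is to read off all the asserted implications from the single identity (\ref{Broglie4}), which is the only substantive ingredient. Since $\varphi=e^{iS/\hbar}$ is a nowhere-vanishing function on $M$, one may divide (\ref{Broglie4}) by $\varphi$ to obtain the pointwise scalar identity
\[
\frac{1}{\varphi}\left(-\frac{\hbar^{2}}{2}\Delta+U\right)\varphi\;=\;\frac{\hbar}{2i}\,\Delta S\;+\;H(\grad S).
\]
Writing $a:=H(\grad S)$ and $b:=\Delta S$, condition A) is exactly $a=E$ (this is (\ref{HJ})), condition B) is $b=0$, and, by the displayed identity, condition C) --- that $\varphi$ satisfies the Schr\"odinger equation with eigenvalue $E$ --- is equivalent to $\frac{\hbar}{2i}\,b+a=E$. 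So the first step is just this translation into three scalar relations.

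Granted the reduction, the first assertion becomes the trivial remark that among the relations $a=E$, $\ b=0$, $\ \frac{\hbar}{2i}b+a=E$ any two force the third: from A) and B) the third reads $0+a=E$, i.e. C); from B) and C) it reads $a=E$, i.e. A); and from A) and C) it gives $\frac{\hbar}{2i}b=0$, hence (this is the one place where $\hbar\neq0$ is used) $b=0$, i.e. B). I would simply spell out these three one-line implications.

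For the reformulation B$'$ I would orient $M$, let $\mathrm{vol}$ be the Riemannian volume form determined by $T_2$, and take the Hermitian $L^{2}$ pairing $\langle f,g\rangle:=\int_{M}f\,\bar g\;\mathrm{vol}$ on compactly supported smooth functions. For a real vector field $X$ on $M$, the identity $\mathcal{L}_{X}\mathrm{vol}=(\diver X)\,\mathrm{vol}$ together with Stokes' theorem gives that the formal adjoint of the first-order operator $f\mapsto X(f)$ is $f\mapsto -X(f)-(\diver X)f$; consequently the adjoint of $-i\hbar\,X$ is $-i\hbar\,X-i\hbar\,(\diver X)$, which coincides with $-i\hbar\,X$ precisely when $\diver X=0$. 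Taking $X=\grad S$ and recalling $\Delta S=\diver\grad S$, formal self-adjointness of $-i\hbar\,\grad S$ is exactly condition B); hence B$'$ may replace B) throughout and the previous paragraph applies unchanged.

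There is no genuine obstacle in the argument itself: the content is concentrated in the identity (\ref{Broglie4}), whose verification --- expand $\Delta e^{iS/\hbar}$ using $\grad e^{iS/\hbar}=\frac{i}{\hbar}(\grad S)\,e^{iS/\hbar}$ and $\Delta=\diver\circ\grad$, and absorb $\frac12\|\grad S\|^{2}+U$ into $H(\grad S)$ --- is the ``straightforward computation'' already recorded above. The only points that repay care are the arithmetic of $i$ (the equality $1/i=-i$ responsible for the coefficient $\frac{\hbar}{2i}$) and, in the B$'$ step, the fact that the $L^{2}$ pairing is Hermitian rather than bilinear, so that the scalar $-i\hbar$ passes to its conjugate $i\hbar$ in the adjoint.
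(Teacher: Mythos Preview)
Your proposal is correct and follows exactly the route the paper indicates: the proposition is stated as an immediate consequence of identity~(\ref{Broglie4}), and you have simply unpacked that implication by dividing through by the nonvanishing $\varphi$ and reading off the three scalar relations, together with the standard divergence computation for~B$'$. There is no difference in approach to discuss.
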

  \bigskip

The phase  $S$ determines the energy $E=H(dS)$ (the ``frequency'') and the velocity of propagation $u=\textrm{grad}\, S$ (up to a factor). The  intensity, wave ``energy'', amplitude, ..., are a different thing to the $E$. As with sound waves, the intensity does not change the frequency or velocity of propagation. Any way we try to interpret the amplitude of a wave (energy density, probability of presence, etc.) must be related to something that is preserved by the flow $u$. When we imposed the condition $\textrm{div} \, u=0$ (conservation of volume), we obtained as a condition the Schrödinger equation for constant amplitude wave functions. Let us impose the condition that $u$ preserve a $n$-form $\rho \, \omega_n$ ($\omega_n$ is the volume form); that condition is $u(\rho)+\rho\,\textrm{div}\,u=0$, that is
\[
\begin{split}
T_2(\textrm{grad}\, S,\textrm{grad}\,\rho)&+\rho\,\Delta S=0,\quad\text{or},\\
u(\textrm{log}\,\rho)&+\Delta S=0\quad\text{(\emph{conservation law}).}
\end{split}
\]
This equation shows that the possible amplitudes in the Lagrangian manifold $u=\textrm{grad}\,S$ are determined up to a factor, which is a first integral of the flow $u$.

\subsection*{Waves on $u=\textrm{grad}\,S$}
Chosen the function $\rho$ so that the flow $u$ preserves the density $\rho\,\omega_n$, let us study the equation satisfied by the wave function
$$\Phi=f(\rho)\,e^{iS}$$
where $f$ is a function, at the moment undetermined. Making the usual calculations, we obtain the identity:
\begin{multline*}
 \phantom{mm}\Delta\Phi=\left\{\frac{f'(\rho)}{f(\rho)}\,\Delta\rho+\frac{f''(\rho)}{f(\rho)}\|\textrm{grad}\,\rho\|^2-\|\textrm{grad}\,S\|^2+\right.\\
     +\left.i\left[\Delta S+2\,\frac{f'(\rho)}{f(\rho)}\,T_2(\textrm{grad}\,S,\textrm{grad}\,\rho)\right]\right\}\, \Phi.\phantom{mm}
\end{multline*}
By adding the conservation law, the imaginary part of $\{\phantom{i}\}$ is
$$\Delta S\left(1-\frac{2\rho f'(\rho)}{f(\rho)}\right).$$
Leaving aside the hypothesis that $S$ is harmonic, the cancellation of that imaginary part requires that, except for one factor, be
 $f(\rho)=\sqrt{\rho}$ or $\rho=|\Phi|^2$, which is consistent with the usual interpretations of $\rho$.

For such a  $f$, we get
\begin{equation*}
\left[\Delta+\|\textrm{grad}\,S\|^2\right]\Phi=\left[\frac{\Delta \rho}{2\rho}-\frac 1{4\rho^2}\|\textrm{grad}\,\rho\|^2\right]\Phi
  =\frac{\Delta\sqrt{\rho}}{\sqrt{\rho}}\,\Phi,
\end{equation*}
so that
$$\left(-\frac 12\,\Delta+U\right)\,\Phi=\left(H-\frac{\Delta\sqrt{\rho}}{2\sqrt{\rho}}\right)\,\Phi,$$
from which we derive the following:
\begin{prop}
Let $(M,T_2,dU)$ be a conservative system, $S$ a function on $M$, $u=\textrm{grad}\,S$, $\rho$ a function such that $\mathcal{L}_u(\rho\,\omega_n)=0$. Then of the three conditions:

\begin{enumerate}[A)]
\item  $S$ holds the Hamilton-Jacobi equation
 $\displaystyle{H(dS)=E}$ \vskip .2cm
\item $\sqrt{\rho}$ is harmonic
\item $\Phi=\sqrt{\rho}\,e^{iS}$ holds the Schrödinger equation
$\displaystyle{\left(-\frac 12\Delta+U\right)\,\Phi=E\,\Phi}$
\end{enumerate}
each pair implies the third.
\end{prop}

In relation to this section, compare with points 2.6, 2.7 in Holland \cite{Holland}.

\bigskip

\section{Quantization of contravariant tensors. Dequantization of differential operators}\label{tres}

Let $M$ be an smooth manifold, $\C^\infty(M)$ its ring of $\C^\infty$ functions with complex values. The tensor fields that we consider have complex valued functions as coefficients. Let $\nabla$ be a symmetric linear connection on $TM$.

\begin{defi}[Quantization defined by $\nabla$]
 For each symmetric contravariant tensor field of order $r$, $\Phi$, on $(M,\nabla)$, the \emph{quantized of $\Phi$} by $\nabla$ is the differential operator $\widehat\Phi$ which, for each  $f\in\A$ gives
\be\label{quantizado}
\widehat\Phi(f):=(-i\hbar)^r\langle\Phi,\nabla^r_{\textrm{sym}}f\rangle
\ee
where $\langle\,,\,\rangle$ denotes tensor contraction and $\nabla^r_{\textrm{sym}}f$ is the symmetrized tensor of the $r$-th covariant iterated differential of $f$ with respect to the connection $\nabla$.

The quantized of a non-homogeneous tensor is the sum of the quantized of its homogeneous components.
\end{defi}

\begin{obs}
This definition can be generalized giving a differential operator between sections of fibre bundles for each contravariant tensor $\Phi$ on $M$, once a linear connection is fixed in the first fibre bundle. This generalization will not be considered in what follows.
\end{obs}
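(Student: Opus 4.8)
The plan is to repeat the construction of the Definition \emph{verbatim}, with scalar functions replaced by sections of a vector bundle; the single new ingredient is the connection needed in order to iterate the covariant differential on such sections. Fix a vector bundle $E\to M$ (where ``linear connection'' has its usual meaning) together with a linear connection $\nabla^E$, keeping the symmetric connection $\nabla$ on $M$. For $s\in\CE$ the first covariant differential is $\nabla^E s\in\C^\infty(T^*M\otimes E)$. To form the second differential I would equip $T^*M\otimes E$ with the tensor-product connection $\nabla^{T^*M}\otimes 1+1\otimes\nabla^E$, where $\nabla^{T^*M}$ is the connection dual to $\nabla$; iterating, at the $k$-th stage one uses on $(T^*M)^{\otimes k}\otimes E$ the connection assembled by the Leibniz rule from $k$ copies of $\nabla^{T^*M}$ and one copy of $\nabla^E$. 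This produces $\nabla^r s\in\C^\infty((T^*M)^{\otimes r}\otimes E)$. Symmetrizing the $r$ cotangent slots gives $\nabla^r_{\textrm{sym}}s\in\C^\infty(S^rT^*M\otimes E)$, and contracting the $S^rT^*M$ factor against $\Phi\in\C^\infty(S^rTM)$ leaves a section of $E$; explicitly
\be
\widehat\Phi(s):=(-i\hbar)^r\langle\Phi,\nabla^r_{\textrm{sym}}s\rangle\in\CE,
\ee
extended additively over the homogeneous components of a non-homogeneous $\Phi$. Source and target are both $\CE$; allowing $\Phi$ to take values in $\mathrm{Hom}(E,F)$ would instead land in $\C^\infty(F)$, but the statement as worded keeps $\Phi$ a plain contravariant tensor.

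Next I would check that $\widehat\Phi$ is a genuine differential operator of order $\le r$ from $\CE$ to $\CE$. Each covariant differential is a first-order operator obeying the Leibniz rule, while symmetrization and contraction with $\Phi$ are $\A$-linear tensor operations; hence the composite $s\mapsto\widehat\Phi(s)$ is local and of order $\le r$, and since no local trivialization or coordinate system enters its definition it is intrinsic. Extracting the leading symbol (only the top-order term of $\nabla^r s$ contributes) gives
\be
\sigma_r(\widehat\Phi)(\alpha_x)=(-i\hbar)^r\langle\Phi,\alpha_x^{\otimes r}\rangle\,\mathrm{id}_{E_x},\qquad \alpha_x\in T^*_xM,
\ee
confirming that the order is exactly $r$ and that $\widehat\Phi$ detects $\Phi$ through its principal part.

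I would then verify that the construction reduces to the Definition. Taking $E$ to be the trivial line bundle $M\times\Com$ with its flat product connection, the sections are precisely the functions $f\in\A$ and $\nabla^E f=df$; the induced connections on $(T^*M)^{\otimes k}\otimes E$ then reduce to those built from $\nabla$ on the cotangent factors alone, so $\nabla^r_{\textrm{sym}}f$ coincides with the symmetrized iterated covariant differential of the Definition and (\ref{quantizado}) is recovered word for word. Thus $\Phi\mapsto\widehat\Phi$ on $\CE$ genuinely extends the scalar rule, which is exactly what the Remark asserts.

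The one delicate point, and the step I would treat carefully, is the coherent definition of the iterated covariant differential on $E$-valued tensors: at each stage a connection on the intermediate bundle $(T^*M)^{\otimes k}\otimes E$ is required, and one must check that these induced connections are mutually compatible --- so that $\nabla^r$ is independent of how the factors are grouped --- and that they restrict correctly to the scalar case. Everything downstream (the $\A$-linearity of symmetrization and contraction, locality, the symbol computation, and additivity over homogeneous components) is formal and parallels the scalar treatment, so I expect no difficulty beyond this bookkeeping.
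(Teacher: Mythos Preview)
The paper does not prove this Remark at all: it explicitly says ``This generalization does not affect what follows, and we leave it aside.'' There is therefore no argument in the paper to compare your proposal against.

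That said, your construction is correct and is precisely the natural way to carry out the generalization the authors allude to: iterate the covariant differential on $E$-valued tensors using the tensor-product connections built from $\nabla$ on the cotangent factors and $\nabla^E$ on the $E$ factor, symmetrize, and contract with $\Phi$. Your observation that only a connection on the \emph{source} bundle is needed matches the Remark's phrasing ``once a linear connection is fixed in the first fibre bundle''; the target bundle $F$ (in the $\mathrm{Hom}(E,F)$-valued variant you mention) requires no connection because the final contraction is purely algebraic. The bookkeeping point you flag --- coherence of the induced connections on $(T^*M)^{\otimes k}\otimes E$ --- is indeed the only place where care is needed, and it goes through by the standard Leibniz-rule construction. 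So your proposal is a valid fleshing-out of what the paper deliberately omits, but there is no ``paper's own proof'' to set it beside.
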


Let us recall that a differential operator of order $r$ on $M$ (= differential operator of order $r$ on $\A$) is an $\Com$-linear map $P\colon\A\to\A$ which holds the following condition: for each point $x\in M$, $P$ takes the ideal $\m_x^{r+1}$ into $\m_x$ ($\m_x$ is the ideal of the functions of $\A$ vanishing at $x$).

It is derived that $P$ takes the quotient $\m_x^{r}/\m_x^{r+1}$ into $\A/\m_x=\Com$. By taking into account that $\m_x^{r}/\m_x^{r+1}$ is the space of symmetric covariant tensors of order $r$ at the point $x$ (homogeneous polynomials of degree $r$, with coefficients in $\Com$, in the $d_xx^1,\dots,d_xx^n$, once taken local coordinates), we see that $P$ determines a symmetric contravariant tensor of order $r$
 called \emph{symbol of order $r$ of $P$ at $x$}, denoted by $\sigma_x^r(P)$,
\be\label{simbolo}
\sigma_x^r(P)\colon\m_x^{r}/\m_x^{r+1}=T^{*r}_xM\to\R,
\ee
that is the map canonically associated with $P$ by pass to the quotient.

When $x$ runs over $M$, we get the tensor field $\sigma^r(P)$ on $M$ called \emph{symbol of order $r$ of $P$}. If $\sigma^r(P)=0$, $P$ is of order $r-1$.

In the case $M=\R^n$, with vector coordinates $x^1,\dots,x^n$, let us denote $\partial^\alpha$ the tensor
$\partial^\alpha:=(\partial/\partial x^1)^{\alpha_1}\cdots(\partial/\partial x^n)^{\alpha_n}$. Its quantized by the rule (\ref{quantizado})  (with the vector connection of $\R^n$) is $\widehat{\partial^\alpha}:=(-i\hbar)^{|\alpha|}D^\alpha$, where $D^\alpha$ is the differential operator ${\partial^{|\alpha|}}/{(\partial x^1)^{\alpha_1}\dots(\partial x^n)^{\alpha_n}}$. It is directly seen that $\sigma^{|\alpha|}(D^\alpha)=\partial^\alpha$, so that for any tensor field of order $r=|\alpha|$ on $\R^n$ is obtained, by adding terms,
\be\label{simboloplano}
\sigma^r\left(\widehat\Phi\right)=(-i\hbar)^r\Phi
\ee

Going from $\R^n$ to the general case $(M,\nabla)$ let us observe that, when the iterated covariant differentials of a function $f$ are calculated in local coordinates, the derivatives of maximum order, $r$, of $f$ appear in terms which does not contain Christoffel  symbols (as in the case of $\R^n$). Since the symbol of an operator of order $r$ depends only on these terms, Formula (\ref{simboloplano}) is still valid in general for the quantization rule
(\ref{quantizado}) on $(M,\nabla)$.

\begin{thm}\label{tquantizado1}
The rule of quantization (\ref{quantizado}) establishes a biunivocal correspondence between linear differential operators $P$ and symmetric contravariant tensor fields (not necessarily homogeneous) on $M$. To the operator $P$ of order $r$ corresponds the tensor $\Phi=\Phi_{r}+\Phi_{r-1}+\cdots\Phi_{0}$ (each $\Phi_j$ denotes the homogeneous component of degree $j$) such that
$$
\sigma^r(P)=(-i\hbar)^r\Phi_r
$$
and, for $k=1,\dots,r$:
$$\sigma^{r-k}(P-\widehat\Phi_{r}-\cdots-\widehat\Phi_{r-k+1})=(-i\hbar)^{r-k}\Phi_{r-k}$$
and
\be\label{quantizado3}
P=\widehat\Phi=\widehat\Phi_{r}+\widehat\Phi_{r-1}+\cdots+\widehat\Phi_{0}.
\ee
\end{thm}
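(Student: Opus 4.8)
The plan is to exploit the key fact already established just before the theorem: for the quantization rule \eqref{quantizado} on $(M,\nabla)$, the symbol of order $r$ of $\widehat\Phi$ equals $(-i\hbar)^r\Phi_r$ whenever $\Phi$ has top-degree component $\Phi_r$, exactly as in the flat case \eqref{simboloplano}. This is what makes the correspondence triangular with respect to the filtration by order, and the whole argument is really a descending induction on the order.

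First I would set up injectivity and well-definedness of $\Phi\mapsto\widehat\Phi$. Given a symmetric contravariant tensor field $\Phi=\Phi_r+\cdots+\Phi_0$ with $\Phi_r\neq 0$, the operator $\widehat\Phi=\widehat\Phi_r+\cdots+\widehat\Phi_0$ is manifestly a differential operator of order $\le r$ (each $\widehat\Phi_j$ has order $\le j$ by \eqref{quantizado}, since $\nabla^j_{\textrm{sym}}f$ involves at most $j$-th derivatives of $f$), and by \eqref{simboloplano} its order-$r$ symbol is $(-i\hbar)^r\Phi_r\neq 0$, so $\widehat\Phi$ has order exactly $r$. If $\widehat\Phi=0$ then its order-$r$ symbol vanishes, forcing $\Phi_r=0$; peeling off degrees one at a time (each $\widehat\Phi_j$ having order exactly $j$ unless $\Phi_j=0$) gives $\Phi=0$. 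This shows the map is injective.

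For surjectivity, I would argue by induction on the order $r$ of a given differential operator $P$. Its order-$r$ symbol $\sigma^r(P)$ is a symmetric contravariant tensor field of degree $r$; set $\Phi_r:=(-i\hbar)^{-r}\sigma^r(P)$. Then by \eqref{simboloplano} the operator $P-\widehat\Phi_r$ has vanishing order-$r$ symbol, hence is a differential operator of order $\le r-1$; by the induction hypothesis it equals $\widehat\Psi$ for a unique tensor field $\Psi=\Psi_{r-1}+\cdots+\Psi_0$, whose homogeneous components I rename $\Phi_{r-1},\dots,\Phi_0$. Unwinding the induction step by step gives precisely the stated formulas: $\sigma^{r-k}(P-\widehat\Phi_r-\cdots-\widehat\Phi_{r-k+1})=(-i\hbar)^{r-k}\Phi_{r-k}$ for each $k$, and $P=\widehat\Phi_r+\cdots+\widehat\Phi_0=\widehat\Phi$ by additivity of the quantization on homogeneous components. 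The base case $r=0$ is trivial: an order-$0$ operator is multiplication by a function $\Phi_0\in\A$, and $\widehat{\Phi_0}$ is that same multiplication.

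I do not expect a serious obstacle here; the content has been front-loaded into the symbol computation \eqref{simboloplano}, namely the observation that in local coordinates the top-order derivatives of $f$ enter $\nabla^r_{\textrm{sym}}f$ with no Christoffel corrections, so $\sigma^r(\widehat\Phi)$ is computed exactly as on $\R^n$. The one point deserving a line of care is that $\widehat\Phi_j$ genuinely has order $\le j$ and not higher — this is immediate from \eqref{quantizado} since contracting $\Phi_j$ (degree $j$) against $\nabla^j_{\textrm{sym}}f$ produces a sum of terms each involving a derivative of $f$ of order at most $j$. Granting that, the descending induction on the order, anchored by the symbol formula, closes the argument cleanly.
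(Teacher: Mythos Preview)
Your proposal is correct and matches the paper's approach: the paper does not give a separate proof of this theorem, but rather front-loads the content into the symbol formula \eqref{simboloplano} and then states the theorem in a form that already spells out the descending induction you carry out. Your write-up simply makes that induction explicit, which is exactly what the paper leaves to the reader.
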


\begin{defi}[Dequantization]\label{dquantizado2}
The contravariant tensor $\Phi$ in (\ref{quantizado3}) is the \emph{dequantized of the differential operator $P$ by the connection $\nabla$}.
\end{defi}

We have seen in Section \ref{previas} that symmetric contravariant tensor fields (homogeneous or not) on $M$ canonically correspond with functions $F\in \AC$ polynomials along the fibres.

\begin{defi}\label{dquantizado3}
The function $F\in\AC$ corresponding to the tensor $\Phi$ dequantized of the differential operator $P$ will be called \emph{Hamiltonian of $P$} with respect to the connection $\nabla$.
\end{defi}

The symplectic structure $\omega_2$ of $T^*M$ assigns to each $F\in\AC$  a Hamiltonian vector field $D_F$, as  we have already remembered in section \ref{previas}, by the rule $D_F\lrcorner\,\omega_2=dF$.
A complex vector field in $M$ (= a derivation of the ring $ \C^\infty(M) $) is separated into real and imaginary parts as a pair of real tangent fields in $ M $. This way, in general, $ D_F $ is considered a pair of real tangent fields, each of which is an \emph{infinitesimal canonical transformation} (i.c.t.) in the sense of Lie \cite{Lie1,Lie2}. The i.c.t. are  infinitesimal generators of (local) uniparametric groups of automorphisms of the manifold $ T^*M $ that preserve its symplectic structure.

\begin{defi}[Hamiltonian field associated with a differential operator] We will call \emph{infinitesimal canonical transformation associated with the differential operator $P$ or Hamiltonian field associated with $P$} to the tangent field $D_P$ on $T^*M$ such that
$${D_P}\lrcorner\,\omega_2+dF=0,$$
where $F$ is the Hamiltonian of $P$. (Note that, in general, $F$ is a complex function, so that $D_F$ is a couple of i.c.t. in the sense of Lie).
\end{defi}

The path $P\to F\to D_P$ is univocal. The reverse path $D_P\to F$ determines $F$ up to a additive constant; then, $F\to P$ is univocal. Thus, up to an additive constant for $P$, the correspondence $P\leftrightarrow D_P$ is biunivocal.
\begin{thm}\label{tquantizado2}
The symmetric linear connection $\nabla$ on $M$ canonically establishes a biunivocal correspondence between linear differential operators $P$, quantized of real tensors, on $\A$ (up to additive constants) and infinitesimal canonical transformations of the simplectic manifold $T^*M$ corresponding to functions polynomial along fibres (Hamiltonians).
\end{thm}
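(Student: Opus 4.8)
The statement is, in effect, a corollary of Theorem~\ref{tquantizado1} chained with the two \emph{canonical} correspondences recalled in Section~\ref{previas}; indeed the remark preceding the statement already isolates the argument ($P\to F\to D_P$ is univocal, and $D_P\to F\to P$ is univocal up to an additive constant for $P$). So the plan is to assemble the chain of bijections and make the additive-constant bookkeeping precise. The three arrows to be linked are: first, the dequantization $P\mapsto\Phi$ of Theorem~\ref{tquantizado1}, a bijection between linear differential operators on $\A$ and symmetric contravariant tensor fields on $M$ (not necessarily homogeneous), which is the only step in which the connection $\nabla$ intervenes; second, the correspondence $\Phi\mapsto F$ between such tensor fields and the functions $F\in\AC$ polynomial along the fibres, obtained by substituting $p_j$ for $\partial/\partial x^j$, which is a bijection by construction; and third, the assignment $F\mapsto D_P$ defined by ${D_P}\lrcorner\,\omega_2+dF=0$. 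Composing the first two arrows yields a bijection $P\leftrightarrow F$ between differential operators and Hamiltonians, canonically determined by $(M,\nabla)$.

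The remaining work concerns the third arrow. Since $\omega_2$ is non-degenerate, $F\mapsto D_P$ has kernel $\{F:dF=0\}$, which on a connected $M$ is exactly the constant functions (on a disconnected $M$ one reads ``additive constant'' as ``locally constant function'' throughout). I would then check that, under the inverse of the second arrow, a constant $c$ corresponds to the order-$0$ contravariant tensor $c$, and that under the inverse of the first arrow this tensor is precisely ``multiplication by $c$'' (consistent with $\widehat\Phi(f)=(-i\hbar)^0\,c\,f$ in Definition~\ref{quantizado} applied to $\Phi=c$). Hence $F-F'$ is constant if and only if $P-P'$ is a constant, so passing to the quotient by additive constants on both ends turns $P\leftrightarrow D_P$ into a genuine bijection. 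One should also note that no larger identification is possible: adding a non-constant $U\in\A$ to $P$ changes $F$ by $\pi^*U$, and $d(\pi^*U)\neq 0$, so $D_P$ does change — the ambiguity is exactly the additive constants and nothing more.

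Finally I would identify the image. By definition an infinitesimal canonical transformation of $T^*M$ ``corresponding to a function polynomial along the fibres'' is a field $D$ with $D\lrcorner\,\omega_2+dF=0$ for some Hamiltonian $F$; since the second arrow is onto all Hamiltonians and the first is onto all symmetric contravariant tensor fields, the range of $P\mapsto D_P$ is precisely this set. Because the last two arrows are canonical on $T^*M$ and only the first uses $\nabla$, the resulting correspondence is canonically determined by the datum $(M,\nabla)$, as asserted. The only substantial mathematical content is already contained in Theorem~\ref{tquantizado1} (the symbol-filtration argument); in the present proof the main — and modest — obstacle is the bookkeeping just described, namely verifying that the kernel of the last arrow pulls back under the first two to exactly the constants, neither less nor more.
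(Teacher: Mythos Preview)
Your proposal is correct and follows exactly the paper's own argument: the paper does not give a separate proof environment for this theorem, but the sentence immediately preceding it (``The path $P\to F\to D_P$ is univocal. The reverse path $D_P\to F$ determines $F$ up to an additive constant; then, $F\to P$ is univocal. Thus, up to an additive constant for $P$, the correspondence $P\leftrightarrow D_P$ is biunivocal.'') is the entire intended justification, and your write-up simply unpacks that chain with the additive-constant bookkeeping made explicit.
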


\bigskip
\noindent\textbf{Remarks on the quantization rule (\ref{quantizado}).}

\smallskip
\noindent (1) Let $\Phi_r$ be the homogeneous tensor of order $r$ that corresponds to $P$ by (\ref{quantizado3}). Considered as a function on $T^*M$, $\Phi_r$ is the function $F_r$, homogeneous of degree $r$ on the fibres. Let us suppose that the function $F_r$ is real. The first order partial differential equation $F_r((dS))=\langle\Phi_r,dS^r\rangle=0$ has as solutions the functions $S$ such that in the hypersurfaces $S=\text{const}$ the initial conditions problem for the differential operator $P$ cannot be treated by the Cauchy-Kowalevski method; they are the  \emph{characteristic hypersurfaces} of $P$ (for instance, for $\Delta$, the equation of characteristics is $\|dS\|^2=0$, the ``Eikonal equation''). The Hamiltonian field of $F_r$ has as solutions the \emph{bicharacteristics} of $P$. This field does not coincide, in general, with $D_P$. The field which propagates the singularities of $P$ is the Hamiltonian field of $F_r$, not the one of the total Hamiltonian of $P$, $D_P$.

\smallskip
\noindent (2) The relationship between the Poisson bracket of two Hamiltonians and the commutator of the corresponding quantum operators is:
\be\label{conmutadosimbolos}
\sigma^{r+s-1}[\widehat\Phi,\widehat\Psi]=-\{\sigma^r(\widehat\Phi),\sigma^s(\widehat\Psi)\}
\ee
where $\Phi$ is a tensor of order $r$, $\Psi$ of order $s$, $[\,,\,]$ is the commutator of quantized tensors and $\{\,,\,\}$ is the Poisson bracket of $\sigma^r(\widehat\Phi)$, $\sigma^s(\widehat\Psi)$, by identified with the functions they define on $T^*M$ (the minus sign proceeds from the convention taken in \ref{previas} for the Poisson bracket).

 Formula (\ref{conmutadosimbolos}) is valid for every connection $\nabla$, and its proof can be done as in the case of $\R^n$ with the vector connection, since only highest order terms of the operator intervene and we can use the symplectic form $\omega_2=dp_j\wedge dx^j$ as in vector coordinates. The checking of (\ref{conmutadosimbolos}) is a simple calculation.

 The analogy with Classical Mechanics that lead Dirac \cite{Dirac} to take the commutator as the ``Poisson bracket'' of quantistic operators, is expressed in (\ref{conmutadosimbolos}), but not in the reverse relation $\widehat{\{\Phi,\Psi\}}\overset{?}=-[\widehat\Phi,\widehat\Psi]$ which, as it is obvious and well known, is false in general, even though it is valid for first order operators.

\smallskip
\noindent (3) Differential operators representing physical magnitudes have to be selfadjoint with respect to the metric $T_2$ given in the configuration space. It is not easy to give necessary and sufficient conditions to assure that a given tensor field $\Phi$ is selfadjoint, starting from the Levi-Civita connection determined by $T_2$. We will show later that (\ref{quantizado}) is equivalent to the rule of quantization given in \cite{QuantizacionMunozAlonso}; there, we did the calculations for operator of orders $\le 3$; in order 3, computations are already very heavy. It is obtained that, for a real tensor $\Phi$,
\begin{equation*}\widehat\Phi\quad\text{formally selfadjoint}\quad\Longleftrightarrow\quad \textrm{div}\,\Phi=0.\end{equation*}
Thus, it is reasonable to conjecture that this condition holds for arbitrary order $r$.

\bigskip

\section{On the Schrödinger and Klein-Gordon equations}\label{SKG}

In Section \ref{intrometrica} we have recalled that a classical mechanical system is defined by three data $(M,T_2,\alpha)$: an smooth manifold $M$ (the configuration space), a pseudo-Riemannian metric $T_2$ on $M$ (in whose coefficients are incorporated the dynamical data of that system: masses, etc.) and a horizontal 1-form $\alpha$ (1-form of work). These data determine, by means of equation (\ref{Newton}) (the \emph{Newton law}), a tangent field $D$ in $TM$ (whose flow is the law of evolution of the system) which is a second order differential equation. The metric $T_2$ establishes an isomorphism between $TM$ and $T^*M$, that allows transporting their structure from one to the other bundle. By means of this transportation it has a sense to say if a field on $T^*M$ is a second order differential equation or if a tangent field on $TM$ is a infinitesimal canonical transformation. The second order differential equations  which are, at the same time, infinitesimal canonical transformations are those fields $D$ on $TM$ which (locally) hold Equation (\ref{hamiltoniano1}).

Let $\nabla$ be the Levi-Civita connection associated with $T_2$; $\nabla$ determines a rule of quantization and dequantization that puts in univocal correspondence differential operators on $\C^\infty(M)$ and contravariant tensor fields in $M$, that is to say, functions on $T^*M$ which are polynomial along the fibres (Theorem \ref{tquantizado1} and Definitions \ref{dquantizado2} and \ref{dquantizado3}). Moving on, the symplectic form $\omega_2$ assigns to each function on $T^*M$ a Hamiltonian field. In this way, it is established the correspondence of Theorem \ref{tquantizado2}).

The following theorem shows that the only differential operators that correspond, in this sense, to tangent fields that govern the evolution of classical mechanical systems, are those of the Schrödinger type:

\begin{thm}\label{teorema3}
Let $T_2$ be a pseudo-Riemannian metric in $M$, and $\nabla$ be the Levi-Civita connection of $T_2$ in $TM$. The necessary and sufficient condition for a real differential operator $P$ on $\C^\infty(M)$ to have, as its associated infinitesimal transformation, $D_P$, a second order differential equation is that $P$ be of the form
\[ P=-\frac{\hbar^2}2\Delta+U
\]
where $\Delta$ is the Laplacian operator of the metric and $U\in\C^\infty(M)$ (real).
\end{thm}
\begin{proof}
Let us start by checking that the tensor $\Phi$, the contravariant form of the metric tensor, has $\widehat\Phi=-\hbar^2\Delta$ as its quantized operator. In local coordinates, with $T_2=g_{jk}\,dx^jdx^k$, is
\[\Phi=g^{rs}\,\frac{\partial}{\partial x^r}\otimes \frac{\partial}{\partial x^s}.
\]
The expression of the second iterate covariant differential is
\[\nabla^2f=\left(\frac{\partial^2f}{\partial x^j\partial x^k}-\Gamma_{jk}^\ell\frac{\partial f}{\partial x^\ell}\right)dx^j\otimes dx^k;
\]
by contracting with $\Phi$ we get:
\[\langle\Phi\, ,\,\nabla^2f\rangle=g^{jk}
              \left(\frac{\partial^2f}{\partial x^j\partial x^k}-\Gamma_{jk}^\ell\frac{\partial f}{\partial x^\ell}\right)=\Delta f.
\]
Incorporating the factor $(-i\hbar)^2$ into $\Phi$ we see that the quantization of $\Phi$ is $-\hbar^2\Delta$.

The Hamiltonian function corresponding with the tensor $\Phi$ is $g^{rs}p_rp_s=2T$ (where $T$ is the kinetic energy function). Finally, for the Hamiltonian $H=T+U$, the corresponding quantum operator is $(-\hbar^2/2)\Delta+U$.

 Dequantizing, we pass from the operator $(-\hbar^2/2)\Delta+U$ to the Hamiltonian $T+U=H$, and then to the Hamiltonian field $D_P$ such that
 $D_P\lrcorner\,\omega_2+dH=0$; $D_P$ is the field of the canonical equations of the mechanical system $(M,T_2,dU)$.

 Conversely, suppose that $D_P$ is a second order differential equation. Equation (\ref{Newton}) gives that $D_P\lrcorner\,\omega_2+dT+\alpha=0$ holds, where $\alpha$ is horizontal; since $D_P$ is an infinitesimal canonical transformation (Theorem \ref{tquantizado2}), $\alpha$ must be locally exact, then (locally) $\alpha=dU$ for some $U\in\C^\infty(M)$. Since $T$ is the Hamilton function associated with the tensor $(1/2)\Phi$, as before, quantizing results in
 $P = -(\hbar^2/2)\Delta+U$.
\end{proof}

The correspondence between differential operators $P$ and Hamiltonian fields $D_P$ in $T^*M$, established in Theorem \ref{tquantizado2}, is done in two steps:
\begin{align*}
&\text{Differential operator $P$}\quad\To\quad \text{Tensor $\Phi$, dequantized of $P$ by $\nabla$, according Theorem \ref{tquantizado1}}\\
&\text{Tensor $\Phi$}\quad\To\quad\text{Hamiltonian field, by means the symplectic structure $\omega_2$ of $T^*M$}
\end{align*}

By keeping the first step, the symplectic structure $\omega_2$ can be changed, thus changing the final field $D_P$.

For example, while the symplectic form $\omega_2$ makes the Laplacian $\Delta$ correspond to the geodesic field $D_G$, the introduction of an electromagnetic field $F_2$ (a 2-form closed in $M$), changing $\omega_2$ by the symplectic form $\omega_F:=\omega_2+F_2$, makes $\Delta$ correspond the field $D_L$ such that $D_L\lrcorner\omega_F+dT= 0$; $D_L$ is the field that governs the motion of virtual particles under the Lorentz force (adjusting the units of charge and mass); $D_L$ is a second order differential equation and a infinitesimal canonical transformation for $\omega_F$, not for $\omega_2$. The ``Hamilton-Jacobi equation'' for $D_L$ and $\omega_F$ is quantized as the Klein-Gordon equation. The details are given in \cite{QuantizacionMunozAlonso}.

\bigskip

\section{Quantization by means of Riemannian exponential}\label{dos}

In \cite{QuantizacionMunozAlonso} we have presented a quantization rule for the classical system $(M,T_2)$ by means of a linear symmetric connection $\nabla$ on $M$. That rule is defined from the geodesic field $D$ associated with the connection. Instead of use directly $\nabla$, we use the geodesic field $D$ of $\nabla$. The flow of the field $D$ on $TM$ allows us to establish an isomorphism of manifolds between a certain neighborhood $\mathcal U_x$ of vector $0$ in $T_xM$ and a neighborhood $U_x$ of $x$ in $M$, by associating with the vector $v_x\in\mathcal U_x$ the final point of the geodesic (curve solution of $D$) parameterized by $[0,1]$ that starts from the point $x$ with initial velocity $v_x$. When $x$ runs over $M$, the union of all the $\mathcal U_x$ is a neighborhood $\mathcal U$ of the 0-section in $TM$, and the flow of $D$ defines, in the above described way, a differentiable map $\textrm{exp}\colon\mathcal U\to M$, in which the 0-section of $TM$ is identified (as a part of $\mathcal U$) with $M$.

For each $f\in\A$, let
\begin{equation}\label{exp1}
\widehat f:=\textrm{exp}^*(f)\in\C^\infty(\mathcal U).
\end{equation}
Of $\widehat f$ the only thing we are interested in is its germ at the section $0$ of $TM$. If we denote by $\Ocal(M)$ the ring of germs of differentiable functions in neighborhoods of the section $0$ of $TM$, we identify $\widehat f$ with its germ $\in\Ocal(M)$. Thus, we have an injection of rings $\A\hookrightarrow\Ocal(M)$, $f\mapsto\widehat f$.

 In the injection of rings $\A\hookrightarrow\AT$ produced by the natural projection $TM\to M$, the $f\in\A$ give functions constant along the fibres, annihilated by each vertical differential operator on $TM$ (except these of order 0). But in the injection $f\mapsto\widehat f$, such operators no longer annihilate the $\widehat f$.

In Section \ref{intrometrica} we have seen how, with each covariant tensor field $a$ there is a vertical contravariant tensor field $\widehat\PHI_a$ on $TM$ associated by means of a rule determined by the metric $T_2$ and the symplectic form of $T^*M$ (or, alternatively, the Fourier transform). In local coordinates, $\widehat\PHI_a$ is obtained by substituting into the expression $a(x,dx)$, each $dx^j$ by the vertical vector field $-i\hbar g^{jk}\partial/\partial\dot x^k$. Or, by considering the coordinates $p_j$ as 1-forms, by substituting each $p_j$ by $-i\hbar\partial/\partial\dot x^j$ (\ref{corr2}).

The quantization rule given in \cite{QuantizacionMunozAlonso} is
\begin{defi}[Quantization by the exponential]
Let $(M,T_2)$ be a configuration space, $\nabla$ a symmetric linear connection on $M$, $\A\hookrightarrow\Ocal(M)$ ($f\mapsto\widehat f$) the injection determined
 by the exponential defined by the geodesic field $D$ of $\nabla$. For each symmetric covariant tensor field $a$ of order $r$, the differential operator $\widehat a$ quantized of the function $\dot a$ by $\nabla$ gives, for each $f\in\A$ the value
\be\label{agorro}
\widehat a(f):=\langle\widehat\PHI_a,d_0^r\widehat f\,\rangle
\ee
 where $\langle\,,\,\rangle$ is the tensor contraction and $d_0^r\widehat f$ is the $r$-th differential of $\widehat f$ along each fibre of $TM$, and taking the value at the 0 section.
\end{defi}

 The ``vertical differential'' $d^r\left(\widehat f|_{T_xM}\right)$ makes sense because $T_xM$ is a vector space.

 So as not to get lost in technicalities in the discussion that follows, let us suppose that the geodesic field $D$ of $\nabla$ is complete.
Let $\{\tau_s\}_{s\in\R}$ be the 1-parametric group of automorphisms of the manifold $TM$  generated by $D$. The exponential map is, in this case, the composition
\begin{equation}\label{diagramaexponencial}
\xymatrix{TM\ar[r]^-{\tau_1}\ar[dr]_-{\textrm{exp}} & TM\ar[d]^-{\pi\text{ (canonical projection)}}\\
  & M}
\end{equation}

From the mathematical point of view, the restriction of the parameter $s$ to the value $1$ is artificial. The natural thing is to consider an arbitrary segment $[0,s]$, $\tau_s$ instead of $\tau_1$, $\textrm{exp}_s=\pi\circ\tau_s$ and $\widehat f_s=\textrm{exp}_s^*(f)$. The classical magnitude $\dot a$ will be quantized as the  differential operator $\widehat a_s$:
\be\label{agorros}
\widehat a_s(f):=\langle\widehat\PHI_a,d_0^r\widehat f_s\,\rangle
\ee
The interesting thing is that the quantization rule $\dot a\to\widehat a_s$ changes in such a way that $\widehat a_s=s^r\widehat a$ for tensors of order $r$. Indeed, since $D$ is a field of the form (\ref{geodesico2}) (it does not matter how are the Christoffel symbols), it holds that $\pi\circ\tau_s(x,v_x)=\pi\circ\tau_1(x,sv_x)$ (the final point of the geodesic parameterized by $[0,s]$ with initial tangent vector $v_x$ at $x$, is the same that the final point of the geodesic parameterized by $[0,1]$ with initial tangent vector $sv_x$). It follows that $\widehat f_s(x,v_x)=\widehat f(x,sv_x)$, that is to say:
 $$\widehat f_s=\widehat f\circ (\text{Homothetie of ratio $s$ along each fibre of $TM$}).$$
 It is derived that $d_0^r\widehat f_s=s^rd_0^r\widehat f$, then $\widehat a_s=s^r\widehat a$. This means that quantization $\dot a\to\widehat a_s$ is deduced from quantization $\dot a\to\widehat a$ by replacing $h$ by $sh$. The field $D$ canonically produces a  1-parametric family of quantizations whose parameter is the Planck ``constant''.

\bigskip

\section{Identity of the two considered rules of quantization}\label{cuatro}
In this section all the functions are real.

Maintaining the above notation, $M$ is an smooth manifold of dimension $n$, $\nabla$ is a symmetric linear connection on $M$. The exponential map associated with $\nabla$ (defined on a neighborhood of $0$ on each fibre $T_{x_0}M$) assigns to each vector $v_{x_0}\in T_{x_0}M$ the point $\textrm{exp}(v_{x_0})\in M$ that is the final point of the geodesic of $\nabla$ parameterized by $[0,1]$ which starts from $x_0$ with tangent vector $v_{x_0}$. The local isomorphism $\exp\colon T_{x_0}M\to M$ assigns to each function $f\in\A$ a differentiable function defined in an neighborhood  of $0$ in $T_{x_0}M$; when $x_0$ runs over $M$, $f$ gives a function $\widehat f$ defined in a neighborhood of the 0 section of $TM$; the map $f\mapsto\widehat f$ injects $\A$ into the ring $\mathcal O(M)$ comprised by germs of smooth functions on neighborhoods of the 0-section of $TM$.

Whatever the local coordinates $x^1,\dots,x^n$ in an neighborhood of $x_0$, the corresponding $\dot x^1,\dots,\dot x^n$ are linear coordinates on $T_{x_0}M$. Thus, for each $g\in\C^\infty(T_{x_0}M)$ the following tensor is intrinsically defined
\be\label{identidad1}
d_0^rg=\sum_{j_1,\dots,j_r=1}^n\frac{\partial^rg}{\partial\dot x^{j_1}\cdots\partial\dot x^{j_r}}(0)\,d_0\dot x^{j_1}\cdots d_0\dot x^{j_r}
\ee

The local isomorphism $\exp\colon T_{x_0}M\to M$ gives an isomorphism of tangent spaces $T_0(T_{x_0}M)\simeq T_{x_0}M$ (the already known) makes to correspond
 to each vertical vector in $TM$ its geometric representative: $(\partial/\partial\dot x^j)_0\to(\partial/\partial x^j)_{x_0}$. The dual morphism makes to correspond $d_{x_0}x^j$ to $d_0\dot x^j$. This isomorphism transforms the $d_0^rg$ of (\ref{identidad1}) into a tensor at the point $x_0$ of $M$. In particular, for each $f\in\A$ we define the tensor
 \be\label{identidad2}
 d_{x_0}^rf:=\sum_{j_1,\dots,j_r=1}^n\frac{\partial^r\widehat f|_{T_{x_0}M}}{\partial \dot x^{j_1}\cdots\partial \dot x^{j_r}}(0)\,d_{x_0} x^{j_1}\cdots d_{x_0} x^{j_r}
 \ee
 which is the tensor at $x_0\in M$ that corresponds to $d_0^r\widehat f$ in (\ref{identidad1}) by means the isomorphism $T_{x_0}M\simeq T_0(T_{x_0}M)$.

 \begin{thm}\label{tidentidad1}
 For each $f\in\A$ and each $r$ is
 \be\label{identidad3}
 d_{x_0}^rf=\nabla^r_{x_0,\textrm{sym}}f
 \ee
 \end{thm}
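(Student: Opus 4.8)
The plan is to prove the identity by showing that the two sides of (\ref{identidad3}), both of which are symmetric covariant $r$-tensors at $x_0$, agree after contraction with $v\otimes\cdots\otimes v$ for an arbitrary $v\in T_{x_0}M$; since over $\R$ a symmetric $r$-tensor is determined by its values on the diagonal (polarization), this forces the tensors themselves to coincide. The left-hand side is read off along radial lines: fix $v\in T_{x_0}M$ and consider the ray $t\mapsto tv$ inside the fibre $T_{x_0}M$. Because $\widehat f=\exp^{*}f$, one has $\widehat f(tv)=f(\exp_{x_0}(tv))=f(\gamma_v(t))$, where $\gamma_v$ is the geodesic of $\nabla$ with $\gamma_v(0)=x_0$ and $\dot\gamma_v(0)=v$. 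Since $\widehat f|_{T_{x_0}M}$ is an honest function on the vector space $T_{x_0}M$ and $t\mapsto tv$ is an affine line, the ordinary (iterated) chain rule gives
$$\frac{d^{r}}{dt^{r}}\Big|_{t=0}\widehat f(tv)=\big\langle d_0^{r}(\widehat f|_{T_{x_0}M}),\,v\otimes\cdots\otimes v\big\rangle,$$
with $v$ read as its vertical representative in $T_0(T_{x_0}M)$. Transporting through the isomorphism $T_0(T_{x_0}M)\simeq T_{x_0}M$ of (\ref{identidad1})--(\ref{identidad2}) (which sends $d_0\dot x^{j}\mapsto d_{x_0}x^{j}$), this right-hand side is exactly $\langle d_{x_0}^{r}f,\,v\otimes\cdots\otimes v\rangle$.

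Next I would compute the same $r$-th derivative from the manifold side. Put $\phi_k(t):=(\nabla^{k}f)(\dot\gamma_v(t),\dots,\dot\gamma_v(t))$ with $k$ arguments, so $\phi_0=f\circ\gamma_v$ and $\phi_0'=\langle df,\dot\gamma_v\rangle=\phi_1$. Differentiating $\phi_k$ along $\gamma_v$ and using the Leibniz rule, the terms in which the covariant derivative $\nabla_{\dot\gamma_v}$ hits one of the arguments are all proportional to $\nabla_{\dot\gamma_v}\dot\gamma_v$, which vanishes because $\gamma_v$ is a geodesic; what survives is $(\nabla_{\dot\gamma_v}\nabla^{k}f)(\dot\gamma_v,\dots,\dot\gamma_v)=(\nabla^{k+1}f)(\dot\gamma_v,\dots,\dot\gamma_v)=\phi_{k+1}(t)$, since $\nabla^{k+1}f=\nabla(\nabla^{k}f)$ and all arguments are the same vector. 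Hence $\phi_k'=\phi_{k+1}$ for every $k$, so iterating from $\phi_0$ yields $\frac{d^{r}}{dt^{r}}\big|_{0}f(\gamma_v(t))=\phi_r(0)=(\nabla^{r}f)_{x_0}(v,\dots,v)$, and this equals $\langle\nabla^{r}_{x_0,\textrm{sym}}f,\,v\otimes\cdots\otimes v\rangle$ because contracting with $v\otimes\cdots\otimes v$ only detects the symmetric part.

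Comparing the two computations gives $\langle d_{x_0}^{r}f,\,v\otimes\cdots\otimes v\rangle=\langle\nabla^{r}_{x_0,\textrm{sym}}f,\,v\otimes\cdots\otimes v\rangle$ for all $v\in T_{x_0}M$; since $d_{x_0}^{r}f$ is manifestly symmetric (partial derivatives commute in (\ref{identidad2})) and $\nabla^{r}_{x_0,\textrm{sym}}f$ is symmetric by construction, the two tensors are equal, which is (\ref{identidad3}). The step I expect to need the most care is the telescoping $\phi_k'=\phi_{k+1}$: one must be sure that the vanishing of the Leibniz correction terms really uses nothing beyond the geodesic equation $\nabla_{\dot\gamma_v}\dot\gamma_v=0$ (no curvature terms intervene precisely because we differentiate a fully contracted scalar, not the tensor in a fixed frame), and that the placement of the newly differentiated slot in the chosen convention for $\nabla^{r}f$ is irrelevant here since every slot is fed the same vector $\dot\gamma_v$. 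An alternative, slightly more computational route would be to work in $\nabla$-normal coordinates centred at $x_0$, where $\widehat f|_{T_{x_0}M}$ is literally $f$ in those coordinates, so that $d_{x_0}^{r}f$ is the array of ordinary partials at $x_0$, and then observe that the Christoffel terms in $\nabla^{r}_{\textrm{sym}}f$ at $x_0$ drop out because the totally symmetric part of $\partial\cdots\partial\Gamma$ vanishes at the centre of a normal chart — but the geodesic argument above is cleaner and makes the role of $\nabla$ transparent.
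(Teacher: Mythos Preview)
Your proof is correct and rests on the same core identity the paper uses: along a geodesic $\gamma_v$ one has $\dfrac{d^{r}}{dt^{r}}f(\gamma_v(t))=(\nabla^{r}f)(\dot\gamma_v,\dots,\dot\gamma_v)$, and combining this with $\widehat f(tv)=f(\gamma_v(t))$ gives equality of the two symmetric tensors on the diagonal, hence everywhere by polarization. Your recursion $\phi_k'=\phi_{k+1}$ via Leibniz plus $\nabla_{\dot\gamma_v}\dot\gamma_v=0$ is exactly right, and your remark that the slot convention for $\nabla^{r}f$ is irrelevant because all arguments are $\dot\gamma_v$ addresses the one place where confusion could arise.

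The difference from the paper is in the execution, not the strategy. The paper first proves the geodesic identity only for a coordinate function $f=x^{1}$, writing out $\dfrac{d^{2}x^{1}}{ds^{2}}=-\Gamma^{1}_{jk}\dot x^{j}\dot x^{k}=\langle\nabla(dx^{1}),\dot x\otimes\dot x\rangle$ and iterating explicitly in Christoffel symbols; it then extends to arbitrary $f$ by the trick that any $f$ with $d_{x_0}f\neq 0$ can itself be taken as the first coordinate of some chart, and finally reaches general $f$ by writing $f=f_{1}+f_{2}$ with $d_{x_0}f_{i}\neq 0$. Your invariant Leibniz argument bypasses this reduction-to-coordinates entirely and treats arbitrary $f$ in one stroke, which is cleaner; the paper's route, on the other hand, makes the connection with the explicit coordinate formula (\ref{identidad4}) for the geodesic equation visible. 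Your suggested alternative via normal coordinates would also work, but, as you say, the geodesic argument is the transparent one.
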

  \begin{proof}
 The right and left members of equation (\ref{identidad3}) are symmetric tensors at $x_0\in M$. Therefore, it is sufficient to show that, for each vector $\xi\in T_{x_0}M$, when we apply both tensors on $\overset{r}{\overbrace{\xi\otimes\cdots\otimes\xi}}$ we get the same result.

 Let us denote by $x(s;\xi)$ the point of the geodesic curve which corresponds with the value $s$ of the parameter and the tangent vector $\xi$ at the initial point $x_0$: that is to say, $$x(0,\xi)=x_0,\quad \dot x(0,\xi)=\xi.$$ We assume, that vectors $\xi$ are contained in a neighborhood of $0\in T_{x_0}M$ where the exponential map is defined; the parameter $s$ runs over $[0,1]$ and $\textrm{exp}\,\xi=x(1,\xi)$.

 By a well known property of the geodesic curves, we have
 \begin{equation}\label{identidad4}
 x(s;\lambda\xi)=x(\lambda s;\xi),\qquad\text{for $|\lambda|\le 1$}.
 \end{equation}

  Let us fix the initial tangent vector, $\xi$, and define, for each covariant tensor field $a$ or order $r$ in $M$, the function
  \begin{equation}\label{identidad5}
  \alpha (s):=\langle a\, ,\,\overset{r}{\overbrace{\dot x(s;\xi)\otimes\cdots\otimes\dot x(s,\xi)}}\,\rangle,\qquad s\in[0,1]
  \end{equation}

  Along the geodesic curve, differentiation  with respect to the parameter $s$, corresponds with taking the covariant derivative with respect to the tangent vector field $\dot x(s,\xi)$. So that, taking into account that $\nabla_{\displaystyle\dot x}\,\dot x=0$, we get
  \begin{equation*}
    \alpha'(s) =\langle \nabla_{\displaystyle\dot x}\, a\, ,\,\overset{r}{\overbrace{\dot x\otimes\cdots\otimes\dot x}}\,\rangle_{x(s,\xi)}
  =\langle \nabla a\, ,\,\overset{r+1}{\overbrace{\dot x\otimes\cdots\otimes\dot x}}\,\rangle_{x(s,\xi)}
  \end{equation*}

  By induction on $r$, we obtain, for each given function $f\in\C^\infty(M)$,
  \begin{equation}\label{identidad6}
  \frac{d^rf(x(s,\xi))}{ds^r}=\langle \nabla^rf\, ,\, \dot x\otimes\cdots\otimes\dot x\rangle_{x(s,\xi)}
     =\langle \nabla^r_{\textrm{symm}}f\, ,\, \dot x\otimes\cdots\otimes\dot x\rangle_{x(s,\xi)}.
    \end{equation}
 \smallskip

 On the other hand, we have $x(s;\xi)=x(1;s\, \xi)$ according (\ref{identidad4}) and hence
 $$f(x(s;\xi))=f(x(1;s\,\xi))=\widehat f(s\,\xi).$$
 By differentiating $r$ times and taking values at $s=0$ we arrive to
 \begin{equation}\label{identidad7}
    \left.\frac{d^rf(x(s,\xi))}{ds^r}\right|_{s=0}=
  \left.\frac{d^r\widehat f(s \,\xi)}{ds^r}\right|_{s=0}=\langle 
  d_0^r\widehat f_{\,\,\mkern 1mu \vrule height 2ex\mkern2mu T_{x_0}M}\, ,\, \xi\otimes\cdots\otimes\xi\rangle.
    \end{equation}

    By taking into account Definition (\ref{identidad2}) and the canonical identification between $T_{x_0}M$ and $T_0(T_{x_0}M)$, formulas (\ref{identidad6}), (\ref{identidad7}) prove the theorem.
    \end{proof}

 \noindent\textbf{Conclusion.}
 The quantization defined in Section \ref{dos} (by means of the Riemannian exponential) and the one defined in Section \ref{tres} (by direct pairing between tensors) are identical. Indeed, the first one is obtained (in addition to the factors $-i\hbar$) for each tensor by applying on each $T_{x_0}M$ the direct quantization described in the Introduction by the formula (\ref{cuantizacionplana}) for $\R^n$, and applying to each
 $\widehat f_{\,\,\mkern 1mu \vrule height 2ex\mkern2mu T_{x_0}M}$; by (\ref{identidad1}), (\ref{identidad2}), (\ref{identidad3}), this is equivalent to directly coupling each  $\nabla^r_{x_0,\textrm{sym}}f$ with the given tensor.

 The exponential map reduces the symmetric covariant differential of arbitrary order to the corresponding linear differential along each fibre.

\bigskip

\section{Extension of the quantization rule to functions in $TM$ that are not Hamiltonian}\label{s:extension}

In this section, the configuration space is a manifold M endowed with a pseudo-Riemannian metric $T_2$ and a symmetric linear connection $\nabla$ in $TM$ (not necessarily the Levi-Civita connection of the metric).

Quantization rule (\ref{quantizado}) for Hamiltonians (polynomial functions in the fibers) cannot be extended to non-Hamiltonian functions in $TM$ by methods of  approximation. The reason, as we will see, is that (\ref{quantizado}) is associated with families of distributions in $TM$ with support in section 0, and no reasonable method of passing to the limit allows us to leave that section. However, the quantization rule (\ref{agorro}), equivalent to (\ref{quantizado}) in the case of Hamiltonians, extends to a very broad class of functions in $TM$, as we will see next.

Let $U$ be the neighborhood of section 0 of $TM$ in which is defined the Riemannian exponential given by $\nabla$. The map $\textrm{exp}\colon U\to M$ gives an injection of rings:
\[
\textrm{exp}^*\colon\C^\infty(M)\to\C^\infty(U),\qquad f\mapsto\widehat f:=\textrm{exp}^*(f)
\]

Let $\mathcal H$ be a complex vector space endowed with a locally convex topology. Each continuous linear map $\chi\colon\C^\infty(U)\to\mathcal H$ gives, composing with $\textrm{exp}^*$, a continuous linear map $$\widehat\chi\colon\C^\infty(M)\to\mathcal H.$$ In a very general sense, $\widehat\chi$ is the quantization of $\chi$ by means of the connection $\nabla$.
\medskip

Suppose that $\mathcal H$ is a space of functions in $M$ for which the functionals ``take values at $x\in M$'' are continuous (for example, $\mathcal H=\C^m(M)$ ($0\le m\le\infty$), $\mathcal{H}=\{$continuous functions bounded in $M$ with the norm of the supreme$\}$, etc.). For each $x\in M$, the composition of the map $\chi$ with ``take values at $x\in M$'' is a continuous linear map $\chi_{_{\displaystyle{_{x}}}}\colon\C^\infty(U)\to\mathbb{C}$, that is, a distribution with compact support in $U$. The restriction of $\chi_{_{\displaystyle{_{x}}}}$ to the sub-ring $\C^\infty(M)$ is a distribution $\widehat\chi_{_{\displaystyle{_{x}}}}$ in the manifold $M$.
\medskip

Let us detail the example of the quantized Hamiltonian operators. For a Hamiltonian $a\in\C^\infty(TM)$ the steps to arrive at the operator $\widehat a$ are:
\medskip

\noindent\textbf {1) (Depending only on the differentiable structure of $M$)}
\smallskip

\noindent The fiberwise Fourier transform from $T_xM$ to $T^*_xM$ changes the covariant tensor $a$ into a vertical contravariant tensor in $T^*M$ (Section \ref{previas}). In local coordinates
\[
a(x,dx)\quad\To\quad a(x,-i\hbar\,\partial/\partial p)
\]

\noindent\textbf {2) (Depending only on the metric $T_2$)}
\smallskip

\noindent The metric establishes an isomorphism of fibred bundles $TM\simeq T^*M$, transporting all the structures from one to the other. Thus, in local coordinates, the $p_k$' are, in $TM$, $p_k=g_{k\ell}\dot x^\ell$, identifiable with the covariant tensor $g_{k\ell}dx^\ell$ that, in turn, gives in $T^*M$ a vertical tangent field $-i\hbar g_{k\ell}\partial/\partial p_\ell$ that in $TM$ is $-i\hbar\partial/\partial\dot x^k$.
We have the correspondences
\[
a(x,dx)\quad\To\quad a(x,-i\hbar\partial/\partial p)\quad\To\quad \PHI_a(x,\partial/\partial\dot x),
\]
where $\PHI_a$ is obtained by substituting $\partial/\partial p_k$ by its traduction in coordinates of $TM$, that is $\partial/\partial p_k=g^{k\ell}\partial/\partial \dot x^\ell$.

The vertical contravariant tensor field in $TM$ canonically defines a differential operator $\widehat\PHI_a\colon\C^\infty(TM)\to\C^\infty(TM)$ that, if $r>0$, annihilates the subring $\C^\infty(M)$ (injected in $\AT$ by the canonical projection $TM\to M$); $\widehat\PHI_a$ is a ``vertical'' operator. For each $x\in M$, let us define in $TM$ the distribution $a_x$ that, on each $g\in\AT$ takes the value
\[
\langle a_x,g\rangle:=\widehat\PHI_a(g)(x,0)=\langle \delta|_{T_xM},\widehat\PHI_a(g)|_{T_xM}\rangle,
\]
where $\delta|_{T_xM}$ is the Dirac distribution at the origin of $T_xM$.

Calling $\widehat \PHI'_a|_{T_xM}$ the transposed of $\widehat \PHI_a|_{T_xM}$  with respect to the function-distribution duality in $T_xM$, it results
\[
a_x=\widehat \PHI'_a|_{T_xM}(\delta|_{T_xM})
.
\]

The distribution field $\{a_x\,\colon\, x\in M\}$ defines the operator
$\widehat{\widehat \PHI}_a\colon\AT\to\A$ as follows: 
$$(\widehat{\widehat \PHI}_ag)(x):=a_x(g),\qquad g\in\AT,\, x\in M.$$

\medskip

\noindent\textbf {3) (Depending on the connection $\nabla$)}
\smallskip

The quantized of $a$ is $\widehat a=\widehat{\widehat\PHI}_a\circ\textrm{exp}^*$; the field of distributions in $M$ is $\widehat a_x=a_x\circ\textrm{exp}^*$.

Distributions $a_x$ are supported in section 0 of $TM$. It is known (\cite{Schwartz}, p.100) that the distributions in $\R^n$ with support at the origin of coordinates are finite linear combinations of derivatives of  $\delta$.
For this reason, the fields of distributions in $TM$ with support in section 0 and ``vertical'' ($\chi_{_{\displaystyle{_{x}}}}$ kills the functions vanishing on $T_xM$) exclusively define differential operators, which are quantized of tensors.
\medskip

In order to extend the tensor quantization method to more general functions in $TM$ we must consider fields of  distributions $a_x$ with compact supports in each $U_x=U\cap T_xM$, that is, continuous linear functionals $a_x\colon\C^\infty(U)\to\mathbb{C}$. These distributions are, in each fiber $T_xM$, Fourier transforms of the functions capable of quantization. In what follows, and to avoid unpleasant discussions of details, we will assume that $\nabla$ is complete in the sense that any geodesic  can be extended to all values of the parameter, so $\textrm{exp}$ is defined in all of $TM$.

In such a case, the functions in $T_xM$ that give as Fourier transform distributions with compact support are characterized by a classic Paley-Wiener-Schwartz theorem (\cite{Schwartz}, p. 271 ff.).

Omitting easy to fill details in the proof, we can state:
\begin{thm}\label{t:extension}
Assuming that the connection $\nabla$ is geodesically complete, the functions in $\AT$ satisfying the Paley-Wiener-Schwartz conditions in the fibers are quantizable (by means of the exponential associated with $\nabla$) as applications $\A\to\A$. Those that are quantized as differential operators are precisely the Hamiltonians.
\end{thm}

Under additional assumptions, the quantization can be extended to functions more general than the theorem. For example, if $M$ is compact, $\textrm{exp}^*\A$ remains within the subspace of $\AT$ comprised by bounded functions which, in each fiber of $TM$, is restricted as a subspace of $\C^\infty(\R^n)$ whose dual contains non-compactly supported distributions; the functions in $\AT$ that, fiberwise give Fourier transforms in such a dual, are quantizable by $\textrm{exp}$.
\medskip


\end{document}